\newtheorem{proposition}{Proposition}
\newcommand{\defeq}{:=}
\title{Generative AI for Bayesian Computation}
\author{
	\makebox[.4\linewidth]{Nicholas G. Polson\footnote{Nick Polson is a Professor in Econometrics and Statistics at the U Chicago Booth, email: ngp@chicagobooth.edu. We thank the participants of the GDRR 13 meeting in Madrid, May 24-26}}\\
	\textit{Booth School of Business}\\
	\textit{University of Chicago}\\
	\and
	\makebox[.4\linewidth]{Vadim Sokolov\footnote{Vadim Sokolov is an Associate Professor in Operations Research at George Mason University, email: vsokolov@anl.gov}}\\
	\textit{Department of Systems Engineering and Operations Research}\\
	\textit{George Mason University}\\
}
\begin{document}

\date{First Draft: Dec 10, 2022\\
This Draft: February 12,  2024}

\maketitle

\begin{abstract}
\noindent  Bayesian Generative AI (BayesGen-AI)  methods are developed and applied to Bayesian computation.  BayesGen-AI  reconstructs the posterior distribution by directly modeling the parameter of interest as a mapping (a.k.a. deep learner) from a large simulated dataset.
This provides a generator that we can evaluate at the observed data and provide draws from the posterior distribution.
This method applies to all forms of Bayesian inference including parametric models,  likelihood-free models, prediction and maximum expected utility problems. 
Bayesian  computation is then equivalent to high dimensional non-parametric regression.
Bayes Gen-AI main advantage is that it is density-free and therefore provides an alternative to Markov Chain Monte Carlo.  It has a  number of advantages 
over vanilla generative adversarial networks (GAN) and approximate Bayesian computation (ABC) methods due to the fact that the generator is simpler to learn than a GAN architecture and is more flexible than kernel smoothing implicit in ABC methods.
Design of the Network Architecture requires careful selection of features (a.k.a. dimension reduction) and  nonlinear architecture for inference. As a generic architecture, we propose a deep quantile neural network and a uniform base distribution at which to evaluate the generator.
To illustrate our methodology, we provide two real data examples, the first in  traffic flow prediction  and the second in building a surrogate for satellite drag data-set.
Finally, we conclude with  directions for future research.
\end{abstract}

\section{Introduction}\label{sec:intro}
Our goal is to develop Bayesian Generative AI (BayesGen-AI) algorithms. Bayes Gen-AI calculates the posterior distributions and functionals of interest that arise in machine learning task and statistical inference by directly writing the parameter of interest as a stochastic mapping (a.k.a deep learner) and thus avoids the use of densities and  Markov Chain Monte Carlo (MCMC) simulation, by directly learning the inverse posterior mapping from parameters $ \theta $ to data $y$, our method doesn't rely on the use of iterative simulation methods or objective functions that involve density calculations. BayesGen-AI  applies to all forms of Bayesian inference including parametric models,  likelihood-free models, prediction and maximum expected utility problems.  To illustrate our methodology, we use two real world data examples: first, a nonlinear traffic flow prediction problem based on \cite{polson2017deep,polson2015a} and, second, a satellite drag dataset. 

BayesGen-AI starts with a large sample from a joint distribution of observables, parameters and an independent base distribution, denoted by  
$ ( y^{(i)} , \theta^{(i)} , \tau^{(i)}  )_{i=1}^N $ where $N$ is large.  For the data generating process we allow for both parameter likelihood inference $ y^{(i)} | \theta^{(i)} \sim p( y | \theta^{(i)} ) $ and likelihood-free models where $ y^{(i)} = f ( \theta^{(i)} ) $ is a given forward map. 
Samples are generated from the prior
$ \theta^{(i)} \sim \pi( \theta ) $, and  a base distribution  $ \tau^{(i)} \sim p(\tau)$, typically a vector of uniforms.
We also allow for the addition of a latent variable which is stochastic  $ z^{(i)} \sim p(z) $ or deterministic regressor variables where $ z^{(i)} = x $ for some given $x$. Hence, our methodology also applies to conditional density estimation.

When the base distribution is uniform, the inverse Bayes map is simply given by the multivariate inverse posterior cdf where we write $\theta = F^{-1}_{\theta \mid y}(\tau)$, where $\tau\sim p(\tau)$.   Hence, given a training data set,  we can directly find  the mapping $\theta^{(i)} = H\left( y^{(i)},\tau^{(i)}\right)$ by learning 
$H$ as a deep neural network from a large simulated training dataset.  The deep learner requires the specification of a sequence of layers including feature extraction, dimension reduction and nonlinearity to pattern match the training data.  Designing the NN architecture for $H$ is the main challenge in Bayes Gen-AI.   For a generic choice, we propose the use of quantile neural networks (QNNs). Our work builds on \cite{jiang2017} who were the first to propose deep learners for dimension reduction methods and to provide asymptotic theoretical results. We also build on the insight by \cite{dabney2018,ostrovski2018} that implicit quantile neural networks can be used to approximated posterior distributions that arise in decision theory. \cite{dabney2017} also show the connection between the widely used $1$-Wasserstein distance and quantile regression.  

Given an observable data point $y$ and a new base draw $\tau$, the posterior $ p(\theta | y) $ is simply determined from the map
$ \theta = H ( y , \tau ) $ where $ \tau \sim U(0,1) $.  From this we can calculate any quantiles of densities that we need. From this we also show how to calculate any posterior
functionals $ E( f(\theta ) | y)$ by integration the appropriate quantile function using the trapezoidal rule.  We provide an $ O( N^{-4} ) $ approximation bound showing the advantage of quantile methods in estimating functionals.
If the base measure is chosen to be uniform, deep Quantile NNs provide a general framework for  training $H$ and for Bayesian decision-making.
Our method also provides an alternative to invertible NN approaches such as normalizing flows, to Gaussian Process surrogates, and we show how ABC methods can be viewed within our framework.

Architecture design to learn an inverse CDF (quantile function), namely $F^{-1}(\tau,y) = f_\theta (\tau,y)$, will use a kernel embedding by augmenting the input space
to the network. The quantile function is represented as a superposition of two other functions $F^{-1}(\tau,y) = f_\theta (\tau,y) = g(\psi(y)\circ \phi(\tau))$ where $\circ$ is the element-wise multiplication operator. Both functions $g$ and $\psi$ are feed-forward neural networks and $ \phi $ is a cosine embedding transform.
To avoid over-fitting, we use a sufficiently large training dataset, see   \cite{dabney2018} in a reinforcement learning context.
Dimension reduction (a.k.a. feature extraction) will draw on methods used in ABC, see \cite{jiang2017} for an approach using deep learning.

Generative AI also allow the researcher to learn a dimensionality-reduced summary (a.k.a. sufficient) statistics along with a non-linear map \citep{jiang2017,albert2022}. A useful interpretation of the sufficient statistic as a posterior mean, which also allows us to view posterior inputs as one of the inputs to the posterior mean. One can also view a  NN is an approximate nonlinear  Bayes filter to perform such tasks \citep{Mueller1997}.  Our framework provides a natural link for black box methods and stochastic methods, as commonly known in the machine learning literature \citep{bhadra2021,breiman2001}.  Quantile deep neural networks and their ReLu/$\tanh$ counterparts provide a natural architecture. Approximation properties of those networks are discussed in \cite{polson2018posterior}. Dimensionality reduction can be performed using auto-encoders and partial least-squares \citep{polson2021a} due to the result by  \cite{brillinger2012,bhadra2021}, see survey by \cite{blum2013} and kernel embeddings approach discussed by \cite{park2016}. Generative AI circumvents the need for methods like MCMC that require the density evaluations.

ABC methods can also be viewed as a generative approach. Here we show that $H$ can be viewed as a  nearest neighbor network. ABC relies on comparing summary statistic $S$ calculated from the observed data to the summary statistics calculated from the simulated data. \cite{jiang2017} show that a natural choice of $S$ is via the posterior mean. \cite{papamakarios2016} shows how to use mixture density networks \citep{bishop1994} to approximate the posterior for ABC calculations.
In an ABC framework with a parametric exponential family, see \cite{beaumont2002} and \cite{nunes2010} for the optimal choice of summary statistics. A local smoothing version of ABC is given in \cite{jiang2018,bernton2019,fearnhead2012}, \cite{longstaff2001} take a basis function approach. \cite{pastorello2003} provide an estimation procedure when latent variables are present.  

In the statistical and engineering literature, generative models often arise in the context of inverse problems \cite{baker2022} and decision-making \cite{dabney2017}. In the context of inverse problems, the prediction of a mean is rarely is an option, since average of several correct values is not necessarily a correct value and might not even be feasible from the physical point of view. The two main approaches are surrogate-based modeling and approximate Bayes computations (ABC) \cite{park2016,blum2013,beaumont2002}. Surrogate-based modeling \cite{gramacy2020surrogates} is a general approach to solve inverse problems, which relies on the availability of a forward model $y = f(\theta)$, which is a deterministic  function of parameters $\theta$. The forward model is used to generate a large sample of pairs $(y,\theta)$, which is then used to train a surrogate, typically a Gaussian Process, which can be used to calculate the inverse map.  For a recent review of the surrogate-based approach see \cite{baker2022}. There are multiple papers that address different aspects of surrogate-based modeling.

Bayes Gen-AI provides an alternative to non-parametric Gaussian process-based surrogates which heavily  rely on the informational contribution of each sample point and quickly becomes ineffective when faced with significant increases in dimensionality \citep{shan_survey_2010,donoho_high_dimensional_2000}. Further,  the homogeneous GPs models predict poorly \citep{binois_practical_2018}. Unfortunately, the consideration of each input location to handle these heteroskedastic cases result in analytically intractable predictive density and marginal likelihoods \citep{lazaro_gredilla_sparse_2011}. Furthermore, the smoothness assumption made by GP models hinders capturing rapid changes and discontinuities in the input-output relations. Popular attempts to overcome these issues include relying on the selection of kernel functions using prior knowledge about the target process \citep{cortes_rational_2004}; splitting the input space into subregions so that inside each of those smaller subregions the target function is smooth enough and can be approximated with a GP model  \citep{gramacy_bayesian_2008, gramacy_local_2015, chang_fast_2014}; and learning spatial basis functions \citep{bayarri_framework_2007,wilson_fast_2014,higdon_space_2002}.


For low-dimensional $\theta$, the simplest approach is to discretize the parameter space and the data space and use a lookup table to approximate $p(\theta \mid y)$. This is the approach taken by \cite{jiang2017}. However, this approach is not scalable to high-dimensional $\theta$. For practical cases, when the dimension of $\theta$ is high, we can use conditional independence structure present in the data to decompose the joint distribution into a product of lower-dimensional functions \citep{papamakarios2017}. In machine learning literature a few approaches were proposed that rely on such a decomposition \citep{oord2016,germain2015,papamakarios2017}. Most of those approaches use KL divergence as a metric of closeness between the target distribution and the learned distribution.

A natural approach to model the posterior distribution using a neural network is to assume that parameters of a neural network are random variables to use either approximation or MCMC techniques to model the posterior distribution over the parameters \citep{izmailov2021bayesian}. A slightly different approach is to assume that only weights of the last output layer of a neural network are stochastic \citep{wang2022data,schultz2022}. BayesGen-AI provides a natural alternative to these methods.

The paper is outlined as follows. Section \ref{sec:intro} provides a review of the existing literature. Sections \ref{sec:model} describes our GenAI-Bayes Bayesian algorithm. Section  \ref{sec:qnn} describes quantile neural network architectures.
Sections \ref{sec:applications} provides two real world  applications from traffic flow prediction problem and a satellite drag data-set. Finally, Section \ref{sec:discussion} concludes with directions for future research.

\section{Generative AI for Bayesian Computation}\label{sec:model}

In this section we  provide a general description of the Bayes Gen-AI algorithm. Then we describe  architecture design and quantile neural networks. We begin by establishing the following notations that will be used.
\begin{align*}
y=&  \; \; {\rm outcome \; of \; interest} \\
\theta =&  \; \; {\rm parameters} \\
z=&  \; \; {\rm latent \; variables} \\
\tau =&  \; \; {\rm base \; distribution}  
\end{align*}
Let $(y, \theta )  \in \Re^{n + k} $ be observable data and parameters.  Write $ \theta = ( \theta_1 , \ldots , \theta_k ) \in \Re^k $. Let $ \tau \in \Re^K $  be a base measure and $ z \in \Re^D $ be a vector of latent variables.
Suppose $ \theta \sim \pi (\theta ) $, $ z \sim p(z) $ and $ \tau \sim p( \tau ) $. A natural choice will be  $ \tau$ a vector of uniforms and $ K=k $.  In the presence of latent variables,
the likelihood is given by
$ p( y| \theta ) = \int p( y ,  z | \theta ) d z $.  

We allow for the special case where the latent variables are observed regressor variable and $ p( z ) = \delta_{\{ z=x \}} $ where $ \delta $ denotes a Kr\"onecker delta function. In this case, we can find the posterior $ p ( \theta | x , y ) $.  Moreover, we allow for likelihood-free inference where $ y= f( \theta ) $ for some deterministic map. The key property is that we will never have to know the densities to learn the inverse Bayes map.

The goal is to compute the posterior distribution $ p(\theta \mid  y) $.
The underlying assumptions are that $ \theta \sim \pi( \theta ) $ a prior distribution. Our framework allows for many forms of stochastic data generating processes.  The dynamics of the data generating process are such that it is straightforward to simulate from a  so-called forward model or traditional stochastic model, namely
\begin{equation}
	y = f(\theta ) \; \; {\rm or} \; \;  y | \theta \sim p(y| \theta ) . 
\end{equation}
The idea is quite straightforward, if we could perform high dimensional non-parametric regression, we could simulate a large training dataset of observable parameter, data and base pairs, denoted by  $ ( y^{(i)} , \theta ^{(i)}, \tau^{(i)} )_{i=1}^N $.  If latent variables are part of the model specification we simply attach $ z^{(i)} $ to the triple. The inverse Bayes map is then given by
\begin{equation}
	\theta \stackrel{D}{=} H( y , \tau ),	
\end{equation}
where $\tau$ is the vector with elements from the baseline distribution, such as Gaussian, are simulated training data. Here $ \tau $ is a vector of standard uniforms. The function $ H : \Re^k \times \Re^D \rightarrow \Re^k $ is a deep neural network.  The function $ H $ is again trained using the simulated data $ ( y^{(i)} , \theta^{(i)} )_{i=1}^N $, via quantile regression using the simulated data
\[
\theta^{(i)} = H( y^{(i)},\tau^{(i)}),~i=1,\ldots,N.
\]
The main construction is the specification of the network structure implicit in $H$. We will provide guidelines in the following sections.

In many practical applications $y$ is high-dimensional, and we can improve the performance of the deep neural network by using a summary statistic $S(y)$. Here $ S : \Re^N \rightarrow \Re^k $ is a $k$-dimensional sufficient statistic, which is a lower-dimensional representation of $y$.  The summary statistic is a sufficient statistic in the Bayes sense \citep{kolmogorov1942}.  The main idea is to use a deep neural network to approximate the posterior mean $ E_\pi ( \theta | y ) $ as the optimal estimate of $ S(y)$.

Assuming that we have fitted the deep neural network, $H$, to the training data, we can use the estimated inverse map to evaluate at new $y$ and $\tau$ to obtain a set of posterior samples for any new $y$ by interpolation and evaluating the map 
\begin{equation}
	\theta \stackrel{D}{=} H_N( S(y) , \tau ),	
\end{equation}
where $ H_N $ denotes the estimated map.

 The caveat being is to how to choose $H$ and how well the deep neural network interpolates for the new inputs. We also have flexibility in choosing the distribution of $\tau$, for example, we can also for $\tau$ to be a high-dimensional vector of Gaussians, and essentially provide a mixture-Gaussian approximation for the set of posterior. Gen-AI in a simple way is using pattern matching to provide a look-up table for the map from $y$ to $\theta$. Bayesian computation has then being replaced by the optimisation performed by Stochastic Gradient Descent (SGD). Thus, we can sequentially update $H$ using SGD step as new data arrives. MCMC, in comparison, is computationally expensive and needs to be re-run for any new data point. In our examples, we discuss choices of architectures for $H$ and $S$. Specifically, we propose cosine-embedding for transforming $\tau$. 



\paragraph{Bayes Gen-AI Algorithm:}  A necessary condition is the ability to simulate from the parameters, latent variables, and data process.
This generates a (potentially large) triple
$$
\left\{y^{(i)} ,  \theta^{(i)},\tau^{(i)}\right\}_{i=1}^N, 
$$
By construction, the posterior distribution can be characterized by the von Neumann inverse CDF map.   For $ \theta = ( \theta_1 , \ldots , \theta_k )$ we have
$$
\theta  \stackrel{D}{=} F_{\theta | y }^{-1} ( \tau ), \; \; {\rm where} \; \; \tau \sim U(0,1) 
$$
is a vector of standard uniforms.

Given the observed data $ y $ and new base draw $ \tau $, we then simply evaluate the following posterior map 
$$
\theta \stackrel{D}{=} H(  S(y)  , \tau )
$$
When the base draw  $\tau $ is uniform of same dimension as the parameter vector,  the map $H$ is precisely the inverse posterior cdf.
This characterises the posterior distribution $ p( \theta | y ) $ of interest.
As simulation and fitting deep neural networks is computationally cheap, we typically choose $ N $ to be of order $ 10^6$ or more.  

In many cases, we can estimate a summary statistic, $S$ also via deep learning, and we write the inverse cdf as a composite map, and we have to fit 
$$  \theta^{(i)}= H ( S( y^{(i)}  ) , \tau^{(i)} )  \; \; {\rm where} \; \;   F_{\theta | y }^{-1} = H \circ S 
$$
There are many variations in the architecture design and construction. It is useful to interpret $S$ as a summary statistic providing dimension reduction in the $y$ space.

There is flexibility in the choice of base distribution. For example, we can replace the $\tau$ with a different distribution that we can easily sample from. One example is a multivariate Normal, proposed for diffusion models \cite{sohl-dickstein2015}. The dimensionality of the normal can be large. The main insight is that you can solve a high-dimensional least squares problem with nonlinearity using stochastic gradient descent.  Deep quantile NNs provide a natural  candidate of deep learners. \cite{dabney2018implicit} show that implicit quantile networks can be used to for distributional reinforcement learning and has proposed the use of quantile regression as a method for minimizing the 1-Wasserstein in the univariate case when approximating using a mixture of Dirac functions. Figure \ref{fig:qnn} illustrates the different in using a quantile objective versus pure density simulation based on latent variable approach. 

The 1-Wasserstein distance is the $\ell_1$ metric on the inverse  distribution function. It is also known as earth mover's distance and can be calculated using order statistics \citep{levina2001earth}. For quantile functions $F^{-1}_U$ and $F^{-1}_V$ the 1-Wasserstein distance is given by
$$
W_1 ( F^{-1}_U , F^{-1}_V ) = \int_0^1 | F^{-1}_U ( \tau ) - F^{-1}_V ( \tau ) | d \tau.
$$

\cite{weng2019gan} shows Wasserstein GAN networks out-perform vanilla GAN due to the improved quantile metric $q = F^{-1}_U(\tau)$ minimize the expected quantile loss
$$
E_{U}[\rho_{\tau}(u-q)]
$$

Quantile regression can be shown to minimize the 1-Wasserstein metric. A related loss is the quantile divergence,
$$
q(U,V) = \int_0^1 \int_{F^{-1}_U(q)}^{F^{-1}_V(q)} (F_U(\tau)-q)  dq d \tau.
$$


\begin{figure}[t]\label{fig:qnn}
	\centering
	\includegraphics[width=0.8\textwidth]{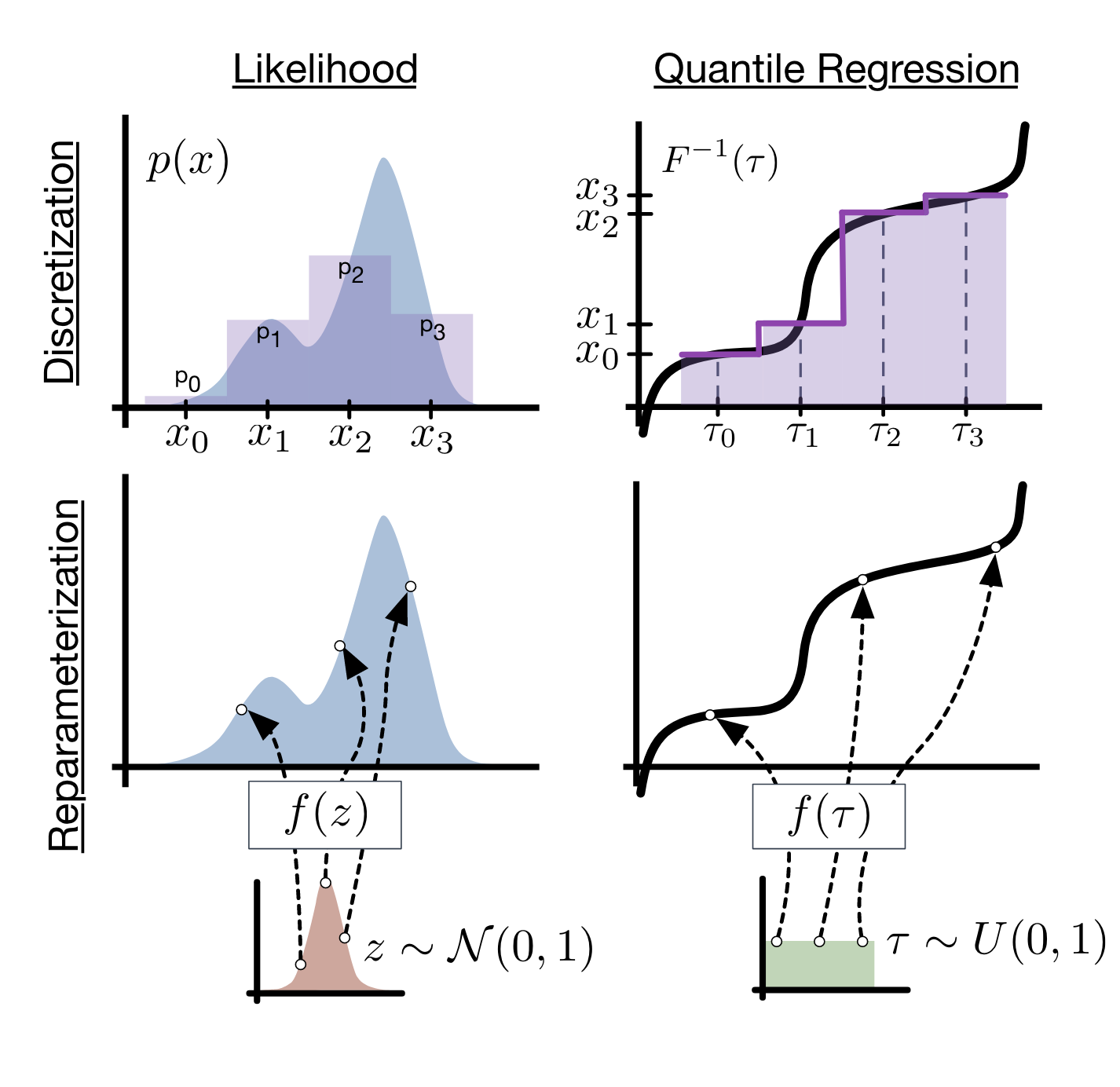}
	\caption{Deep Quantile Neural Network. Source: \cite{ostrovski2018}}
\end{figure}

Hence, even if new $y$ is not in the simulated training dataset  we can still learn the posterior map of interest. The Kolmogorov-Arnold theorem says any multivariate function can be expressed this way.  So in principle if sample size is large enough we can learn the manifold structure in the parameters for any arbitrary nonlinearity. As the dimension of the data $y$ is large, in practice, this requires providing an efficient architecture. 

The following provides a  summary of  Bayes Gen-AI algorithms:

\begin{algorithm}[H]
   \caption{Gen-AI for Bayesian Computation (GenAI-Bayes)}\label{algorithm_a}
\begin{algorithmic}[Gen-AI-Bayes]
   \STATE Simulate $\theta^{(i)} \sim \pi(\theta)$. Simulate $ ( y^{(i)} \mid \theta^{(i)} )_{1 \leq i \leq N}  \sim p(y\mid \theta)$ or $ y^{(i)} = f ( \theta^{(i)} )  $ and  $\tau^{(i)} \sim p(\tau)$ 
   \STATE Train $H$ using the simulated dataset for $ i = 1 , \ldots N $, via $$\theta^{(i)} = H( y^{(i)},\tau^{(i)})$$ 
   \STATE For the observed $y$, calculate a sample from $p(\theta \mid y)$  with a new base draw $ \tau \sim p( \tau ) $ by 
   $$\theta \stackrel{D}{=} H(y , \tau )$$
\end{algorithmic}
\end{algorithm}

In the case when $y$ is high-dimensional, and we can learn a summary statistic $S(y)$, we can use the following algorithm:
\begin{algorithm}[H]
	\caption{GenAI-Bayes with Dimension Reduction}\label{algorithm_ab}
 \begin{algorithmic}[Gen-AI-Bayes]
	\STATE Simulate $\theta^{(i)} \sim \pi(\theta)$. Simulate $ ( y^{(i)} \mid \theta^{(i)} )_{1 \leq i \leq N}  \sim p(y\mid \theta)$ or $ y^{(i)} = f ( \theta^{(i)} )  $ and  $\tau^{(i)} \sim p(\tau)$ 
	\STATE Learn a summary statistic $S(y)$ using a deep neural network
	\STATE Train $H$ using the simulated dataset for $ i = 1 , \ldots N $, via $$\theta^{(i)} = H( S(y^{(i)}),\tau^{(i)})$$ 
	\STATE For the observed $y$, calculate a sample from $p(\theta \mid y)$  with a new base draw $ \tau \sim p( \tau ) $ by 
	$$\theta \stackrel{D}{=} H(S(y) , \tau )$$
 \end{algorithmic}
 \end{algorithm}

Consider now the case where there exist latent variables, $z$,  related to the data density via 
$$ p( y | \theta ) = \int p(y,z | \theta ) d z  $$ 
Again assume that  and when  $ z | \theta $ and $ y | z , \theta $ are  straightforward to simulate either from a density of forward model. Hence, we can simulate  
$ y|\theta $ by composition.

\begin{algorithm}
   \caption{Gen-AI with Latent Variables (GenAI-Bayes-Latent)}\label{algorithm_b}
\begin{algorithmic}[Gen-AI-Bayes]
   \STATE Simulate $\theta^{(i)} \sim \pi(\theta)$. Simulate $y^{(i)} , z^{(i)} \mid \theta^{(i)} \sim p(y,z \mid \theta)$, $i=1,\ldots,N$ or $ y^{(i)} = f( \theta^{(i)} ) $.
   \STATE  Let $S$ denotes feature variables:
   Train $H$ and $S$, using $\theta^{(i)} = H(S( z^{(i)}, y^{(i)}) ,\tau^{(i)})$, where $\tau^{(i)} \sim p(\tau) $
   \STATE For a given $y$, calculate a sample from $p(\theta \mid y)$ with  $\tau$  a new base draw
   $$\theta \stackrel{D}{=} H \left  ( S((z, y) , \tau \right )$$ 
\end{algorithmic}
\end{algorithm}

To learn the feature selection variables $ S(y)$, 
\cite{jiang2017} propose to use a deep learner to approximate the posterior mean $ E_\pi ( \theta | y ) $ as the optimal estimate of $ S(y)$.
Specifically, they construct a minimum squared error estimator $ \hat{\theta} ( y) $ from a large simulated dataset and calculate 
$$
\min_\psi \frac{1}{N} \sum_{i=1}^N \Vert f_\psi ( y^{(i)} ) - \theta^{(i)} \Vert^2_2
$$
where $ f_\psi$ denotes a DNN with parameter $ \psi$.  The resulting estimator $  \hat{\theta } (y ) = f_{ \hat{\psi} } ( y^{(i)} ) $ approximates the feature vector $ S(y ) $.

Together with the following architecture for the summary statistic neural network
\begin{align*}
	H^{(1)} = & \tanh\left(W^{(0)}H^{(0)}+b^{(0)}\right)\\
	H^{(2)} = & \tanh\left(W^{(1)}H^{(1)}+b^{(1)}\right)\\
	& \vdots\\
	H^{(L)} = & \tanh\left(W^{(L-1)}H^{(L-1)}+b^{(L-1)}\right)\\
	\hat\theta = & W^{(L)}H^{(L)}+b^{(L)},
\end{align*}
where $H^{(0)} = \theta$ is the input, and $\hat \theta$ is the summary statistic output. ReLU activation function can be used instead of $\tanh$.
In our applications we use ReLU.

There is still the question of approximation and the interpolation properties of a DNN.

\paragraph{Folklore of Deep Learning:}  \textit{Shallow Deep Learners provide good representations of multivariate functions and are good interpolators}.

\vspace{0.1IN}

A number of authors have provided theoretical approximation results that apply to BayesGen-AI. For example, interpolation properties of quantile neural networks were recently studied by \cite{padilla2022quantile} and \cite{shen2021deep}, \cite{schmidt2020nonparametric}


\subsection{Dimension Reduction}

Given $y$, the posterior density is denoted by $ p(\theta \mid  y ) $.  Here $ y = ( y_1 , \ldots , y_n ) $ is high
dimensional. Moreover, we need the set of  posterior probabilities $   \pi_{\theta \mid y}(\theta \in B\mid y) $ for all Borel sets $B$.
Hence, we need two things, dimension reduction for $y$. 
The whole idea is to estimate "maps" (a.k.a. transformations/feature extraction) of the output data $y$, thus, it is reduced to uniformity.

There is a nice connection between the posterior mean and the sufficient statistics, especially minimal sufficient statistics in the exponential family. If there exists a sufficient statistic $S^*$ for $\theta$, then \cite{kolmogorov1942} shows that for almost every $y$, $\pi(\theta\mid y) = \pi(\theta\mid S^*(y))$, and further  $S(y) = E_{\pi}(\theta \mid y) = E_{\pi}(\theta \mid S^*(y))$ is a function of $S^*(y)$. In the special case of an exponential family with minimal sufficient statistic $S^*$ and parameter $\theta$, the posterior mean $S(y) = E_{\pi}(\theta \mid y)$ is a one-to-one function of $S^*(y)$, and thus is a minimal sufficient statistic.

Deep learners are good interpolators (one of the folklore theorems of machine learning). Hence, the set of posteriors $ p(\theta \mid  y ) $ is characterized by the distributional identity
$$
\theta  \stackrel{D}{=} H ( S(y) , \tau_k ),  \; \; {\rm where}  \;  \; \tau_k \sim U(0,1 ),
$$
where $y = ( y_1 , \ldots , y_n )  $ and $ \tau_k $ is a vector of standard uniforms.

\paragraph{Summary Statistic:}  Let $S(y)$ is sufficient summary statistic in the Bayes sense \citep{kolmogorov1942}, if for every prior $\pi$
\[
f_B (y) \defeq    \pi_{\theta \mid y}(\theta \in B\mid y) = \pi_{\theta \mid s(y)}(\theta \in B\mid s(y)).
\]
Then we need to use our pattern matching dataset $ ( y^{(i)} , \theta^{(i)} ) $ which is simulated from the prior and forward model to "train" the set of functions
$ f_B (y) $, where we pick the sets $ B = ( - \infty , q ] $ for a quantile $q$. Hence, we can then interpolate the mapping to the observed data.

The notion of a summary statistic is prevalent in the ABC literature and is tightly related to the notion of  a Bayesian 
sufficient statistic $S^*$ for $\theta$, then (Kolmogorov 1942), for almost every $y$,
$$\pi(\theta \mid  Y=y) = \pi(\theta \mid S^*(Y) = S^*(y))$$
Furthermore,  $S(y) = \mathrm{E}\left(\theta \mid Y = y\right) = \mathrm{E}_{\pi}\left(\theta \mid S^*(Y) = S^*(y)\right)$ is a function of $S^*(y)$. In the case of exponential family, we have $S(Y) =  \mathrm{E}_{\pi}\left(\theta | Y \right)$ is a one-to-one function of $S^*(Y)$, and thus is a minimal sufficient statistic.

Sufficient statistics are generally kept for parametric exponential families, where $S(\cdot)$ is given by the specification of the probabilistic model.
However, many forward models have an implicit likelihood and no such structures. The generalization of sufficiency is a summary statistics 
(a.k.a. feature extraction/selection in a neural network).  Hence, we make the assumption that there exists a set of features such that the dimensionality of the problem is reduced

Learning $S$ can be achieved in a number of ways. First, $S$ is of fixed dimension $ S \in \Re^s $ even though $y= ( y_1 , \ldots y_n )$.
Typical architectures include Auto-encoders and traditional dimension reduction methods.  \cite{polson2021a} propose to use a theoretical result of Brillinger methods to perform a linear mapping $S(y) = W y$ and learn $W$ using PLS.  \cite{nareklishvili2022a} extend this to IV regression and casual
inference problems.

\paragraph{PLS}
Another architecture for finding summary statistics is to use PLS. Given the parameters and data, the map is 
\[
\theta^{(i)} = H\left( B y^{(i)} ,\epsilon\right)
\]
We can find a set of linear maps $S(y) = By$. This rule also provide dimension reduction. Moreover, due to orthogonality of $y,\epsilon$, we can simply consistently estimate $\hat S(y) = \hat By$  via $\theta^{(i)} = H(\hat By)$. A key result of \cite{brillinger2012} shows that we can use linear SGD methods and partial least squares to find $\hat B$. 

\paragraph{ABC}
Approximate Bayesian Computations (ABC) is a common approach in cases when likelihood is not available, but samples can be generated from some model, e.g. epidemiological simulator. The ABC rely on comparing summary statistic calculated from data $s_y$ and of from observed output $s(\theta)$ and approximates
\[
p(\theta,s(\theta)\mid s_y,\epsilon) \propto \pi(\theta)f(s(\theta)\mid \theta) K_{\epsilon}\left(\|s(\theta) - s_y\|\right).
\] 
Then the approximation to the posterior is simply
$p(\theta \mid s_y) = \int p_\epsilon (\theta,s(\theta)\mid s_y) ds $.
Then the ABC algorithm simply samples from $p(\theta)$, then generates  summary statistic $s(\theta)$ and rejects the sample with probability proportional to $ K_{\epsilon}(\|s(\theta) - s_y\|)$. The Kernel function $K$ can be a simple indicator function
\[
	K_{\epsilon}(\|s(\theta) - s_y\|) = \mathds{1}(\|s(\theta) - s_y\| < \epsilon)
\]
The use of  deep neural networks to select $s(\theta)$ has been proposed by 
\cite{jiang2017}.

Conventional ABC methods suffers from the main drawback that the samples do not come from the true posterior, but an approximate one, based on the $\epsilon$-ball approximation of the likelihood, which is  a non-parametric local smoother. Theoretically, as $\epsilon$ goes to zero, you can guarantee samples from the true posterior. However, the number of sample required is prohibitive. Our method circumvents this by replacing the ball with a deep learning generator and directly model the relationship between the posterior and a baseline uniform Gaussian distribution. Our method is also not a density approximation, as many authors have proposed. Rather, we directly use $L_2$ methods and Stochastic Gradient Descent to find transport map from $\theta$ to a uniform or a Gaussian random variable. The equivalent to the mixture of Gaussian approximation is to assume that our baseline distribution is high-dimensional Gaussian. Such models are called the diffusion models in literature. Full Bayesian computations can then be reduced to high-dimensional $L_2$ optimization problems with a carefully chosen neural network.

\paragraph{Invertible NNs} In some cases, we can model $ H : \Re^n \rightarrow \Re^n $ as an invertible map where $ x = g(z) $  is generated from a base density $p(z)$. 
When $g$ is chosen to be an invertible neural network with a structured diagonal Jacobian that is easy to compute we can calculate densities. This makes the mapping $g$ far easier to learn.
Our objective function can then be standard log-likelihood based on 
$$
p(x ) = p( z ) | {\rm det} J_z |^{-1}, \; \;  {\rm where} \; \; J_z = \frac{\partial g(z) }{\partial z } 
$$
Bayes Flow provides an example of this type of architecture.

The second class of methods proposed on machine leaning literature involves using deep learners to approximate an inverse CDF function or a more general approach that represents the target distribution over $\theta$ as a marginalization over a nuance random variable $z$ \citep{kingma2022}. In the case of inverse CDF, the latent variable $z$ is simply uniform on $(0,1)$ \citep{bond-taylor2022}. One of the approaches of this type is called Normalizing flows. Normalizing flows provide an alternative approach of defining a deterministic map $\theta \mid x, \theta_g  = G(z,x, \theta_g)$ that transforms a univariate random variable $z\sim p(z)$ to a sample from the target distribution $G(z,x, \theta_g)= x \sim F(\theta)$. If transformation $G$ is invertible ($G^{-1}$ exists) and differentiable, then the relation between target density $F$ and the latent density $p(z)$ is given by \cite{Rezende15}:
\begin{equation}
    F(\theta) =p(z)\left|\det\frac{\partial G^{-1}}{\partial z}\right| = p(z)\left|\det\frac{\partial G}{\partial z}\right|^{-1},
\end{equation}
where $z = G^{-1}(y)$. A typical procedure for estimating the parameters of the map $G$ relies on maximizing the log-likelihood
\begin{equation}
\label{eq:max_likelihood}
  \log p(z) + \log \left|\det\frac{\partial G^{-1}}{\partial z}\right|
\end{equation}
The normalizing flow model requires constructing map $G$ that have tractable inverse and Jacobian determinant. It is achieved by representing $G$ as a composite map
\begin{equation}
\label{eq:NF1}
G = T_k \circ \cdots \circ T_1,
\end{equation}
and to use simple building block transformations $T_i$ that have tractable inverse and Jacobian determinant. 

We now describe our  Quantile Neural Network approach. 

\section{Quantile Neural Networks}\label{sec:qnn}

\subsection{Learning Quantiles}

The quantile regression likelihood function as an asymmetric function that penalizes overestimation errors with weight $\tau$ and underestimation errors with weight $1-\tau$. For a given input-output pair $(x,y)$, and the quantile function $f(x,\theta)$, parametrized by $\theta$, the quantile loss is $\rho_{\tau}(u) = u(\tau-I(u<0))$, where $u = y-f(x)$. From the implementation point of view, a more convenient form of this function is 
\[
\rho_{\tau}(u) = \max(u\tau,u(\tau-1)).
\]
Given  a training data $\{x_i,y_i\}_{i=1}^N$, and given quantile $\tau$, the loss is 
\[
L_{\tau}(\theta) = 	\sum_{i=1}^{N} \rho_{\tau}(y_i - f(\tau,x_i,\theta)).
\]
Further, we empirically found that adding a means-squared loss to this objective function, improves the predictive power of the model, thus the loss function, we use is 
\[
	\alpha L_{\tau}(\theta) + \frac{1}{N} \sum_{i=1}^{N} (y_i - f(x_i,\theta))^2.
\]
One approach to learn the quantile function is to use a set of quantiles $0<\tau_1<\tau_2,\ldots,\tau_K<1$ and then learn $K$ quantile functions simultaneously by minimizing
\[
L(\theta,\tau_1,\ldots,\tau_K) = 	\frac{1}{N K}\sum_{i=1}^{N}\sum_{k=1}^{K}\rho_{\tau_k}(y_i - f_{\tau_k}(x_i,\theta_k)).
\]
The corresponding optimization problem of minimizing $L(\theta)$ can be augmented by adding a non-crossing constraint
\[
	f_{\tau_i}(x,\theta_i) < f_{\tau_j}(x,\theta_j), ~ \forall X, ~ i<j.
\]
The non-crossing constraint has been considered by several authors, including \cite{chernozhukov2010,cannon2018}.

\paragraph{Cosine Embedding for $\tau$}

To learn an inverse CDF (quantile function) $F^{-1}(\tau,y) = f_\theta (\tau,y)$ we will use a kernel embedding trick and augment the predictor space. We then represent the quantile function is a function of superposition for two other functions $$F^{-1}(\tau,y) = f_\theta (\tau,y) = g(\psi(y)\circ \phi(\tau))$$ where $\circ$ is the element-wise multiplication operator. Both functions $g$ and $\psi$ are feed-forward neural networks. $ \phi $ is a cosine embedding.
To avoid over-fitting, we use a sufficiently large training dataset, see   \cite{dabney2018} in a reinforcement learning context.

Let $g$ and $\psi$ be feed-forward neural networks and $\phi$ a cosine embedding given by
\[
	\phi_j(\tau) = \mathrm{ReLU}\left(\sum_{i=0}^{n-1}\cos(\pi i \tau)w_{ij} + b_j\right).
\]
We now illustrate our approach with  a simple synthetic dataset.

\subsection{Synthetic Data}
Consider a synthetic data generated from the model 
\begin{align*}
x &\sim U(-1,1) \\
y &\sim N(\sin(\pi x)/(\pi x), \exp(1-x)/10).
\end{align*}
The true quantile function is given by 
\[
f_{\tau}(x) = \sin(\pi x)/(\pi x) + \Phi^{-1}(\tau)\sqrt{ \exp(1-x)/10},
\]

\begin{figure}[H]
\centering
\begin{tabular}{cc}
\includegraphics[width=0.45\linewidth]{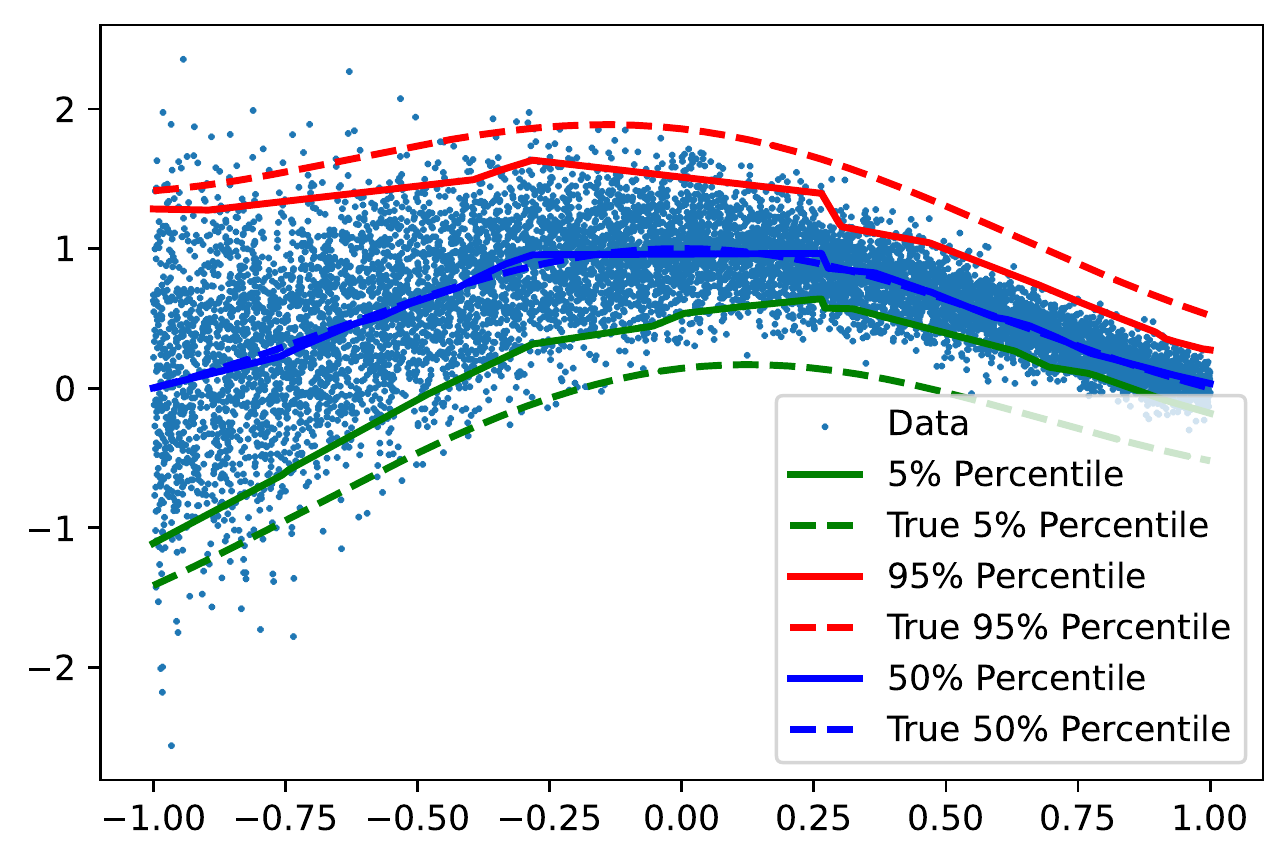} & \includegraphics[width=0.45\linewidth]{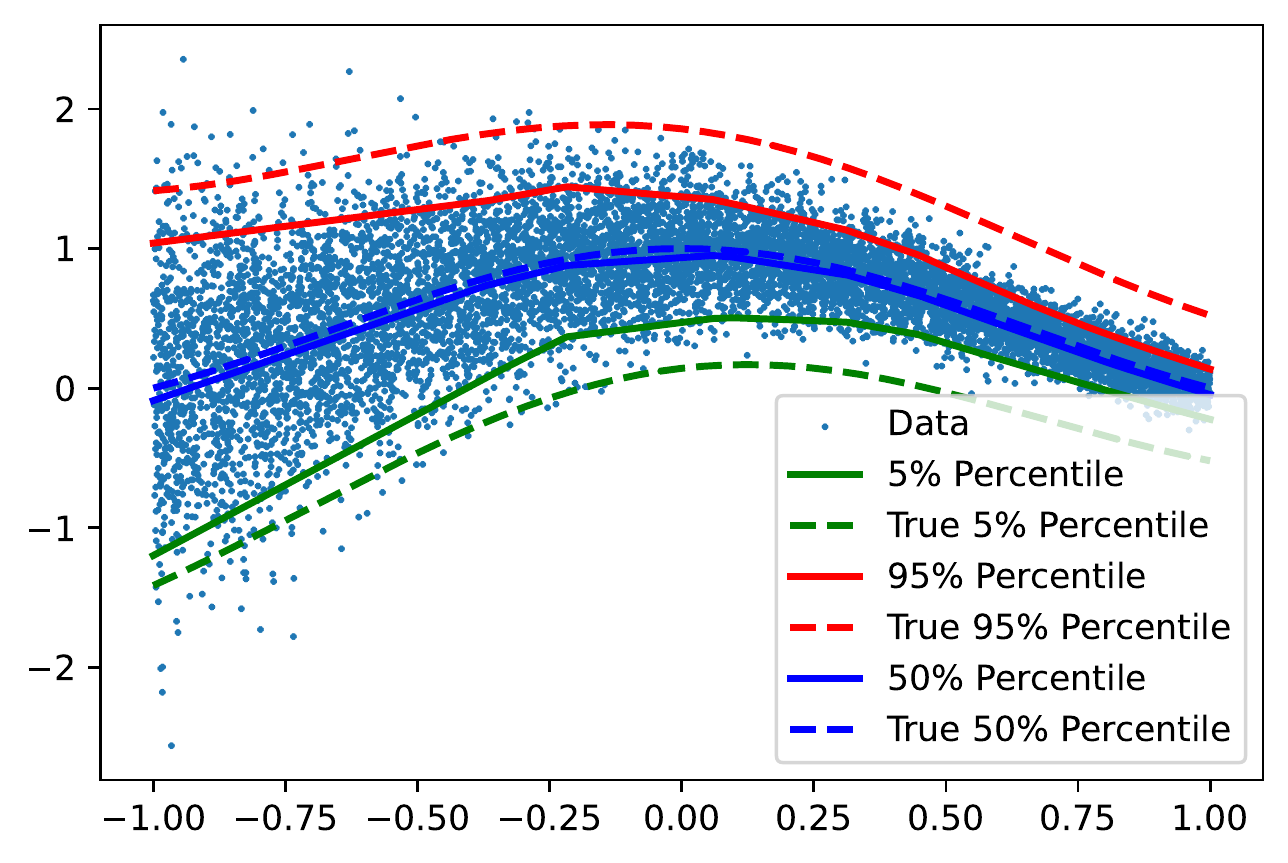}\\
(a) Implicit Quantile Network & (b) Explicit Quantile Network
\end{tabular}
\caption{We trained both implicit and explicit networks on the synthetic data set. The explicit network was trained for three fixed quantiles (0.05,0.5,0.95). We see no empirical difference between the two}
\label{fig:synthetic}
\end{figure}

We train two quantile networks, one implicit and one explicit. The explicit network is trained for three fixed quantiles (0.05,0.5,0.95). Figure \ref{fig:synthetic} shows fits by both of the networks, we see no empirical difference between the two. 

\subsection{Quantiles as Deep Learners}

One can show that quantile models are direct alternatives to other Bayes computations. Specifically, given $F(y)$, a non-decreasing and continuous from right function.  We define 
$$Q_{\theta| y} (u) \defeq  F^{-1}_{\theta|y}  ( u ) = \inf \left ( y : F_{\theta|y} (y) \geq u \right ) $$ which is non-decreasing, continuous from left.
Now let  $ g(y)$ to be a non-decreasing and continuous from left  with 
$$ 
g^{-1} (z ) = \sup \left ( y : g(y ) \leq z \right ) 
$$
Then,  the transformed quantile has a compositional nature, namely 
$$
Q_{ g(Y) } ( u ) = g ( Q (u ))
$$
Hence, quantiles act as  superposition (a.k.a. deep Learner). 

This is best illustrated in the Bayes learning model.  We have the following result updating prior to posterior quantiles known as the conditional quantile representation
$$
Q_{ \theta | Y=y } ( u ) = Q_Y ( s )  \; \; 
{\rm where} \; \;   s = Q_{ F(\theta) |  Y=y } ( u) 
$$  
To compute $s$ we use
$$
u = F_{ F(\theta ) | Y=y} ( s  ) = P( F (\theta ) \leq s | Y=y ) 
= P( \theta \leq Q_\theta (s ) | Y=y )  = F_{ \theta | Y=y } ( Q_\theta ( s ) ) 
$$
\cite{parzen2004}  also shows the following probabilistic property of quantiles
$$
\theta = Q_\theta ( F_\theta (\theta ) ) 
$$
Hence, we can increase the efficiency by ordering the samples of $\theta$ and the baseline distribution as the mapping being the inverse CDF is monotonic. 
\subsection{Normal-Normal Learning}
Consider the normal learning model
$y| \theta \sim N ( \theta , \sigma^2 ) \; \; {\rm and} \; \; \theta \sim N( \mu , \tau^2 ) $.
 Given observations $ (y_1 , \ldots , y_n ) $, we can reduce the likelihood using sufficiency to give
 $$
 \bar{y} | \theta , \sigma^2 \sim N \left  ( \theta , \frac{ \sigma^2 }{n} \right)  \; \; {\rm and} \; \;  \theta \sim N( \mu , \tau^2 ) 
 $$
 The posterior mean is
 $$
 \hat{\theta } ( y) = E( \theta | y ) = \frac{\sigma^2 / n }{  \sigma^2 / n + \tau^2 } \mu +  \frac{ \tau^2 }{  \sigma^2 / n + \tau^2 } \bar{y}
 $$
 The prior to posterior quantile map is given by
 $$
 1 - F_{\theta | y } (\theta ) = g( 1 - F_\theta (\theta ) ) 
 $$
 where $ g : [0,1] \rightarrow [0,1] $ is a concentration (aka distortion) function given by
$$
g(x) = \Phi \left (  w \Phi^{-1} ( x ) + b \right )  \; \; , \; w = \tau \sqrt{ \frac{1}{  \sigma^2 / n + \tau^2 } } ,  b = \frac{1}{ \sigma / \sqrt{n} }  \sqrt{ \frac{\tau^2}{\sigma^2/n + \tau^2 } } 
$$
where $\Phi$ is standard normal CDF function. 

\subsection{Calculating Functionals of Interest}

Another advantage of quantile methods is that we can find tight error bounds of $ O( N^{-4} ) $ for estimating functionals of interest.
Another important property of quantiles is that they can be used to calculate functionals of interest
$$
E_{\theta | y } ( \theta ) = \int_{- \infty}^\infty \theta \pi( \theta | y ) d \theta = \int_0^1 Q_{\theta | y } ( \tau ) d \tau 
$$
We can then use the trapezoidal rule and obtain an efficient approximation due lemma of \cite{yakowitz1978weighted}.
\begin{proposition}
Assume $Q(x)$ is a function with a continuous second derivative on $[0, 1]$ and 
$$\theta \equiv \int_0^1 Q(x)dx. $$ 
If $Y_0 \equiv 0, Y_{n+1} \equiv 1$, and $\{Y_i\}_{i=1}^n$ is the ordered sample associated with n independent uniform observations on [0, 1], then for the estimator 
$$
\theta_n = \frac{1}{2} \left[ \sum_{i=0}^n (Q(Y_i) + Q(Y_{i+1})) (Y_{i+1} - Y_i)\right]
$$
We have that for some constant M,
$$
E[(\theta_n - \theta)^2] \leq M/n^4, \qquad \text{for all } n \geq 1. 
$$
\end{proposition}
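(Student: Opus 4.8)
The plan is to decompose the global quadrature error into a sum of local trapezoidal errors, bound each by the elementary deterministic estimate, and then reduce the whole problem to computing a second moment of uniform spacings via their Dirichlet distribution.

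First I would invoke the deterministic trapezoidal error formula on each random subinterval $[Y_i, Y_{i+1}]$. Writing $D_i = Y_{i+1} - Y_i$ for the spacings, so that $\sum_{i=0}^n D_i = 1$, the elementary identity gives
$$\int_{Y_i}^{Y_{i+1}} Q(x)\,dx - \frac{D_i}{2}\left(Q(Y_i) + Q(Y_{i+1})\right) = -\frac{D_i^3}{12}\, Q''(\xi_i)$$
for some $\xi_i \in (Y_i, Y_{i+1})$, valid because $Q$ has a continuous second derivative. Summing over $i = 0, \ldots, n$ telescopes the integral to $\theta$ and collapses the quadrature sum to $\theta_n$, yielding
$$\theta_n - \theta = \frac{1}{12}\sum_{i=0}^n D_i^3\, Q''(\xi_i).$$

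Next, since $Q''$ is continuous on the compact interval $[0,1]$ it is bounded, say $|Q''| \le C$. Because every $D_i^3 \ge 0$, the triangle inequality gives the clean pointwise bound
$$(\theta_n - \theta)^2 \le \frac{C^2}{144}\left(\sum_{i=0}^n D_i^3\right)^2,$$
which conveniently removes all dependence on the random evaluation points $\xi_i$, so only the crude uniform bound on $Q''$ is needed. Taking expectations reduces the claim to showing $E\left[\left(\sum_{i=0}^n D_i^3\right)^2\right] = O(n^{-4})$, after which choosing $M$ as the supremum of $n^4$ times this (finite for each $n$) gives the stated bound for all $n \ge 1$.

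The final and main step is the moment computation, which is where the real work lies. The key fact is that the spacings $(D_0, \ldots, D_n)$ of $n$ uniform order statistics follow a Dirichlet$(1,\ldots,1)$ distribution on the simplex, with joint moments $E\left[\prod_i D_i^{k_i}\right] = n!\,\prod_i k_i! \,/\, (n + \sum_i k_i)!$. Expanding the square as $\sum_i D_i^6 + \sum_{i \ne j} D_i^3 D_j^3$ and applying this formula gives $E[D_i^6] = 720\, n!/(n+6)!$ and $E[D_i^3 D_j^3] = 36\, n!/(n+6)!$. Multiplying by the $n+1$ diagonal and $n(n+1)$ off-diagonal terms respectively, the diagonal contribution is $O(n^{-5})$ while the cross contribution is $O(n^{-4})$, so their sum is $O(n^{-4})$ as required. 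I expect the main obstacle to be precisely this bookkeeping: one must recognize the Dirichlet structure (equivalently, the normalized-exponential representation $D_i = E_i / \sum_j E_j$) to obtain exact spacing moments, and notice that it is the off-diagonal cross terms, not the diagonal, that set the order of the bound.
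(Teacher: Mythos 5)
Your proof is correct and follows essentially the same route as the paper's: the local trapezoidal error identity on each random subinterval, a uniform bound $|Q''|\le C$, and exact moments of the uniform spacings to control $E\bigl[\bigl(\sum_i D_i^3\bigr)^2\bigr]$. Your Dirichlet computation in fact yields the correct cross moment $E[D_i^3 D_j^3] = 36\,n!/(n+6)!$, whereas the paper's displayed $(3!)^2\,n!/(n+1)!$ is a typo that, read literally, would make the off-diagonal sum of order $O(n)$ rather than $O(n^{-4})$.
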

\begin{proof}
For any particular sample $Y_1, Y_2, ... , Y_n$, we have 
$$
\theta - \theta_n = \sum_{j=1}^n \left( \int_{Y_j}^{Y_{j+1}} Q(t)dt - \frac{1}{2} [(Q(Y_j) + Q(Y_{j+1})) (Y_{j+1} - Y_j)] \right).
$$
A well-know error bound for the trapezoidal rule is that for any numbers a, b such that $0 \leq a < b \leq 1$,
$$
\int_a^b Q(t)dt - \frac{b-1}{2} [Q(b) + Q(a)] = -\frac{(b-a)^3}{12} Q''(\xi),
$$
where $a \leq \xi \leq b$. Using this result we have that if $C \geq |Q''(x)|,$ $0 \leq x \leq 1$, then
\begin{eqnarray*}
|\theta - \theta_n| &\leq& \left| \int_0^{Y_1} Q(t)dt - \frac{1}{2} Y_1 (Q(0) + Q(1)) \right|\\
& & +\left| \left[ \sum_{i=1}^{n-1} \int_{Y_i}^{Y_{i+1}} Q(t)dt -  \frac{1}{2} [(Q(Y_i) + Q(Y_{i+1})) (Y_{i+1} - Y_i)\right] \right|\\
& & +\left| \int_{Y_n}^1 Q(t)dt -  \frac{1}{2} [(Q(Y_n) + Q(Y_1)) (1 - Y_n) \right|\\
&\leq& \frac{C}{12} \sum_{i=1}^{n+1} Z_i^3,
\end{eqnarray*}
where the $Z_i$'s are as defined as follows.

Let $\{Y_i\}_{i=1}^n$ be the ordered sample constructed from $n$ independent, uniform observations on the unit interval. Define
$$
Z_1 = Y_1,\qquad Z_j = Y_j-Y_{j-1},\qquad 2 \leq j \leq n, \text{ and } Z_{n+1} = 1-Y_n
$$

\cite{yakowitz1978weighted} provides the following result.
Let $\{Y_i\}_{i=1}^n$ be the ordered sample constructed from $n$ independent, uniform observations on the unit interval, then
$$
E[Z_i^6] = 6!n!/(n+6)!, \qquad i = 1, 2, ... , n+1,
$$ 
$$
E[Z_i^3Z_j^3] = (3!)^2n!/(n+1)!, \qquad i, j = 1, 2 , ... , n+1,\qquad i\neq j. 
$$

Consequently, letting $i$ and $j$ range from $1$ to $n + 1$, we have
\[
E[(\theta_n - \theta)^2] \leq E\left[ \left( \frac{C}{12} \sum_i Z_i^3 \right)^2 \right] =  \frac{C^2}{144} \left[ \sum_i E[Z_i^6] + \sum_{i \neq j} E[Z_i^3Z_j^3] \right]= O\left(n^{-4}\right).
\]

\end{proof}

\section{Applications} \label{sec:applications}

\subsection{Traffic Data}
We further illustrate our methodology, using data from a sensor on interstate highway  I-55. The sensor is located eight miles from the Chicago downtown on I-55 north bound (near Cicero Ave), which is part of a route used by many morning commuters to travel from southwest suburbs to the city. As shown on Figure \ref{fig:traffic}, the sensor is located 840 meters downstream of an off-ramp and 970 meters upstream from an on-ramp.
\begin{figure}[H]
\centering
\includegraphics[width=0.6\linewidth]{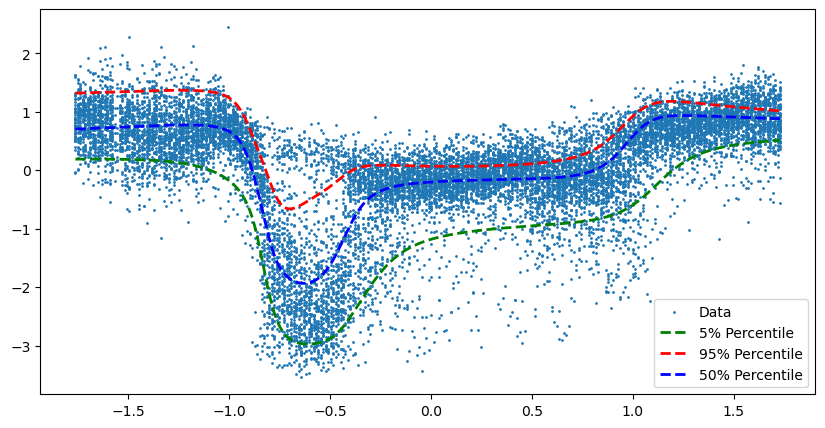}
\caption{Implicit neural network for traffic speed observed on I-55 north-bound towards Chicago}
\label{fig:traffic}
\end{figure} 

In a typical day traffic flow pattern on Chicago's I-55 highway, where sudden breakdowns are followed by a recovery to free flow regime. We can see a breakdown in traffic flow speed during the morning peak period followed by speed recovery. The free flow regimes are usually of little interest to traffic managers. We also, see that variance is low during the free flow regime and high during the breakdown and recovery regimes.

We use the following architecture to model the implicit quantile function
\begin{align*}
	\tau_1 = &\mathrm{ReLU} \left(w^{(1)}_0 + \sum_{i=1}^{64} w^{(1)}_i \cos(i\pi\tau)\right)\\
	x_1 = & \mathrm{ReLU}\left(w^{(2)}_0 + \sum_{i=1}^{64} w^{(2)}_i x_i\right)\\
	z = & \mathrm{ReLU}\left(w^{(3)}_0 + \sum_{i=1}^{64} w^{(3)}_i \tau_ix_i\right)\\
	y_1 = &\mathrm{ReLU}\left(w^{(4)}_0 + \sum_{i=1}^{32} w^{(4)}_i x_i\right)\\
	\hat y = &w^{(5)}_0 + w^{(5)}_i x_i.\\
\end{align*}
We now consider a satellite drag example.

\subsection{Satellite Drag}
Accurate estimation of satellite drag coefficients in low Earth orbit (LEO) is vital for various purposes such as precise positioning (e.g., to plan maneuvering and determine visibility) and collision avoidance. 

Recently, 38 out of 49 Starlink satellites launched by SpaceX on Feb 3, 2022,  experienced an early atmospheric re-entry caused by unexpectedly elevated atmospheric drag, an estimated \$100MM loss in assets. The launch of the SpaceX Starlink satellites coincided with a geomagnetic storm, which heightened the density of Earth's ionosphere. This, in turn, led to an elevated drag coefficient for the satellites, ultimately causing the majority of the cluster to re-enter the atmosphere and burn up. This recent accident shows the importance of accurate estimation of drag coefficients in commercial and scientific applications \citep{Berger23}.

Accurate determination of drag coefficients is crucial for assessing and maintaining orbital dynamics by accounting for the drag force. Atmospheric drag is the primary source of uncertainty for objects in LEO. This uncertainty arises partially due to inadequate modeling of the interaction between the satellite and the atmosphere. Drag is influenced by various factors, including geometry, orientation, ambient and surface temperatures, and atmospheric chemical composition, all of which are dependent on the satellite's position (latitude, longitude, and altitude). 

Los Alamos National Laboratory developed the Test Particle Monte Carlo simulator to predict the movement of satellites in low earth orbit \citep{mehta2014modeling}. The simulator takes two inputs, the geometry of the satellite, given by the mesh approximation and seven parameters, which we list in Table \ref{tab:satellite} below. The simulator takes about a minute to run one scenario, and we use a dataset of one million scenarios for the Hubble Space Telescope \citep{gramacy2020surrogates}. The simulator outputs estimates of the drag coefficient based on these inputs, while considering uncertainties associated with atmospheric and gas-surface interaction models (GSI). 

\begin{table}[H]\label{tab:satellite}
\center
\begin{tabular}{l|c}
Parameter                             & Range            \\\hline
velocity {[}m/s{]}                    & {[}5500, 9500{]} \\
surface temperature {[}K{]}           & {[}100, 500{]}   \\
atmospheric temperature {[}K{]}       & {[}200, 2000{]}  \\
yaw {[}radians{]}                     & $[-\pi,\pi]$     \\
pitch {[}radians{]}                   & $[-\pi/2,\pi/2]$ \\
normal energy AC {[}unitless{]}       & {[}0,1{]}        \\
tangential momentum AC {[}unitless{]} & {[}0,1{]}       
\end{tabular}
\caption{Input parameters for the satellite drag simulator.}
\end{table}

We use the data set of 1 million simulation runs provided by \cite{sun2019emulating}. The data set has 1 million observations, and we use 20\% for training and 80\% for testing out-of-sample performance. The model architecture is given below. We use the Adam optimizer and a batch size of 2048, and train the model for 200 epochs. 

\cite{sauer2023non} provides a survey of modern  Gaussian Process based models for prediction and uncertainty quantification tasks. They compare five different models, and apply them to the same Hubble data set we use in this section. We use two metrics to assess the quality of the model, namely RMSE, which captures predictive accuracy, and continuous rank probability score (CRPS; \cite{gneiting2007strictly,zamo2018estimation}). Essentially CRPS is the absolute difference between the predicted and observed cumulative distribution function (CDF). We use the degenerative distribution with the entire mass on the observed value (Dirac delta) as to get the observed CDF. The lower  CRPSis, the  better.

Their best performing model is treed-GP has the RMSE of 0.08 and CRPS of 0.04, the worst performing model is the deep GP with approximate ``doubly stochastic'' variational inference  has RMSE of 0.23 and CRPS of 0.16. The best performing model in our experiments is the quantile neural network with RMSE of 0.098 and CRPS of 0.05, which is comparable to the top performer from the survey.

Figure \ref{fig:satellite} plots of the out-of-sample predictions  for forty randomly selected responses (green crosses) and compares those to 50th quantile predictions (orange line) and 95\% credible prediction intervals (grey region). 
\begin{figure}[H]\label{fig:satellite}
	\includegraphics[width=1\textwidth]{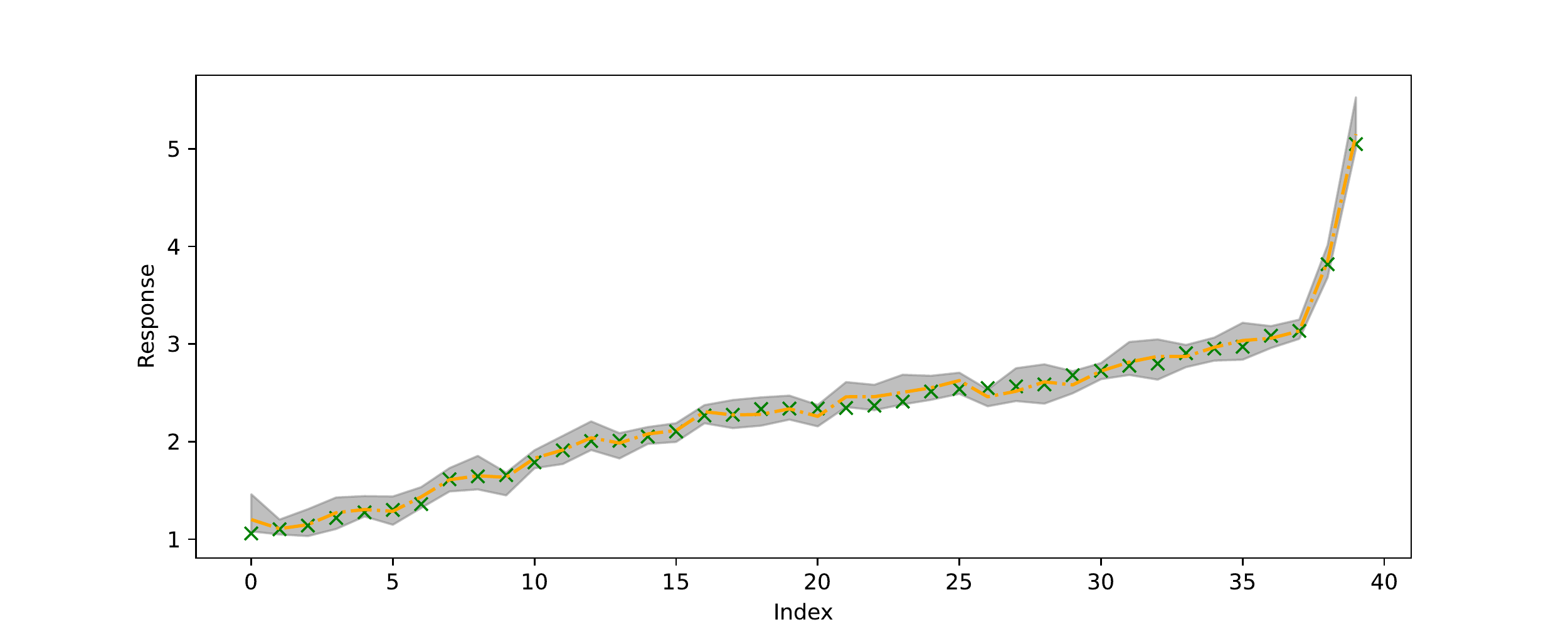}
\caption{Randomly selected 40 out-of-sample observations from the satellite drag dataset. Green crosses are observed values, orange line is predicted 50\% quantile and the grey region is the 95\% credible prediction interval.}
\end{figure}

Figure \ref{fig:satellite-err}(a) compares the out-of-sample predictions (50\% quantiles) $\hat y$ and observed drag coefficients $y$. We can see that histogram resembles a normal distribution centered at zero, with some ``heaviness'' on the left tail, meaning that for some observations, our model under-estimates. The scatterplot in Figure \ref{fig:satellite-err}(b) shows that the model is more accurate for smaller values of $y$ and less accurate for larger values of $y$ and values of $y$ at around three.
\begin{figure}[H]\label{fig:satellite-err}
\begin{tabular}{cc}
	\includegraphics[width=0.5\textwidth]{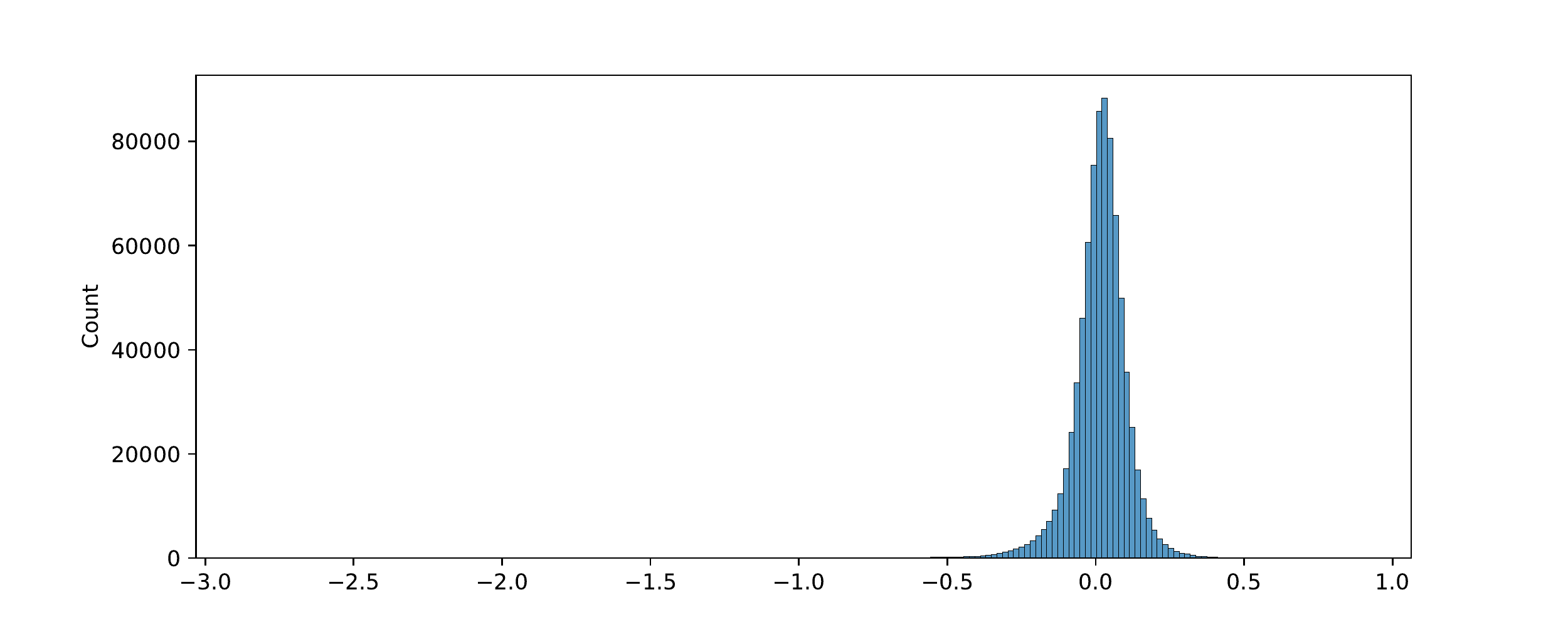} & \includegraphics[width=0.5\textwidth]{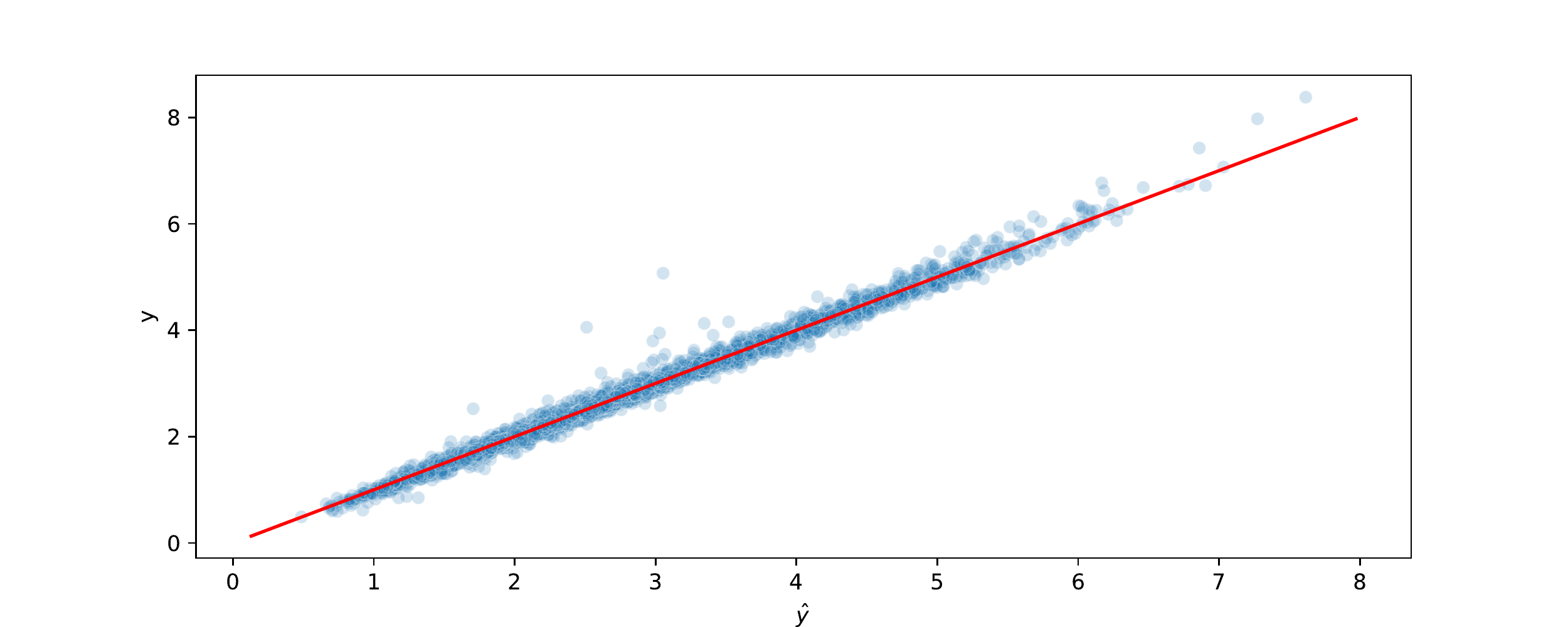}\\
	(a) Histogram of errors $\hat{y} - y$ & (b) $y$ vs $\hat{y}$\\
\end{tabular}
\caption{Comparison of out-of-sample predictions (50\% quantiles) $\hat y$ and observed drag coefficients $y$.}
\end{figure}

Finally, we show histograms of the posterior predictive distribution for four randomly chosen out-of-sample response values in Figure \ref{fig:satellite-hist}. We can see that the model concentrates the distribution of around the true values of the response.
\begin{figure}[H]\label{fig:satellite-hist}
\begin{tabular}{cc}
	\includegraphics[width=0.5\textwidth]{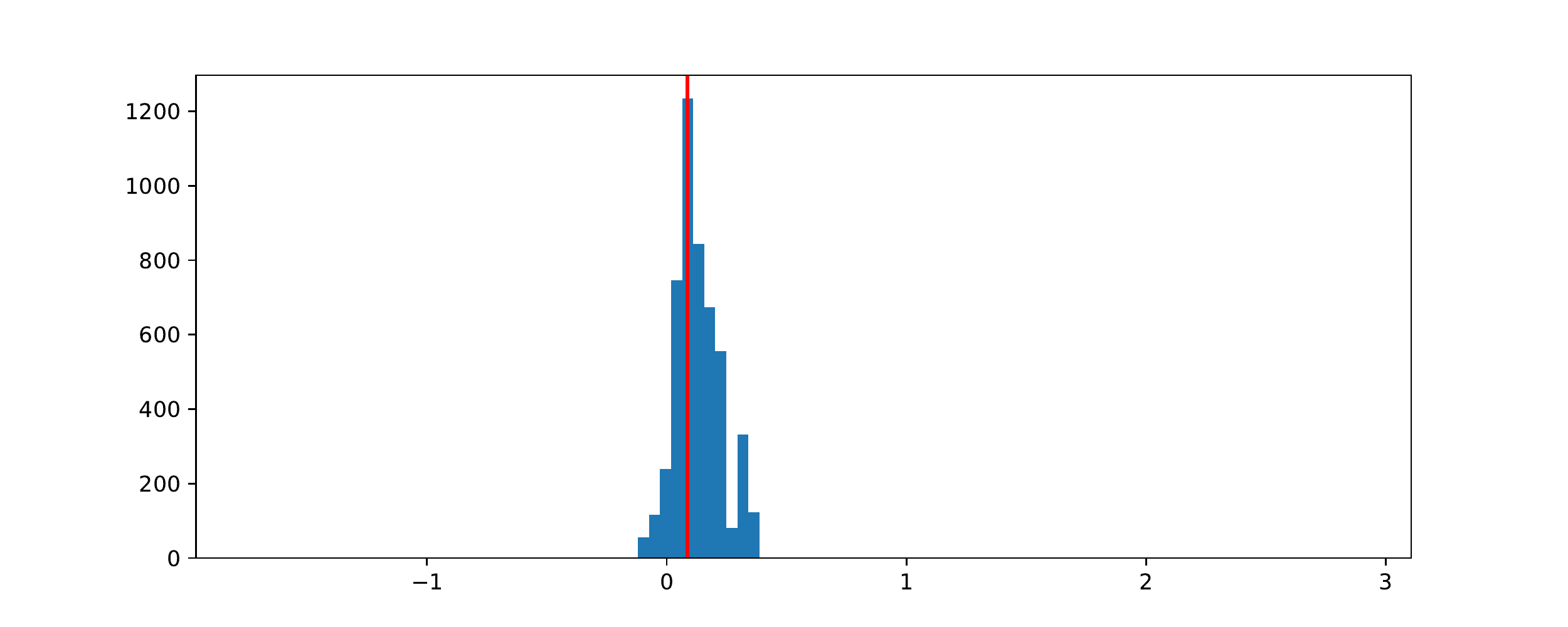} & \includegraphics[width=0.5\textwidth]{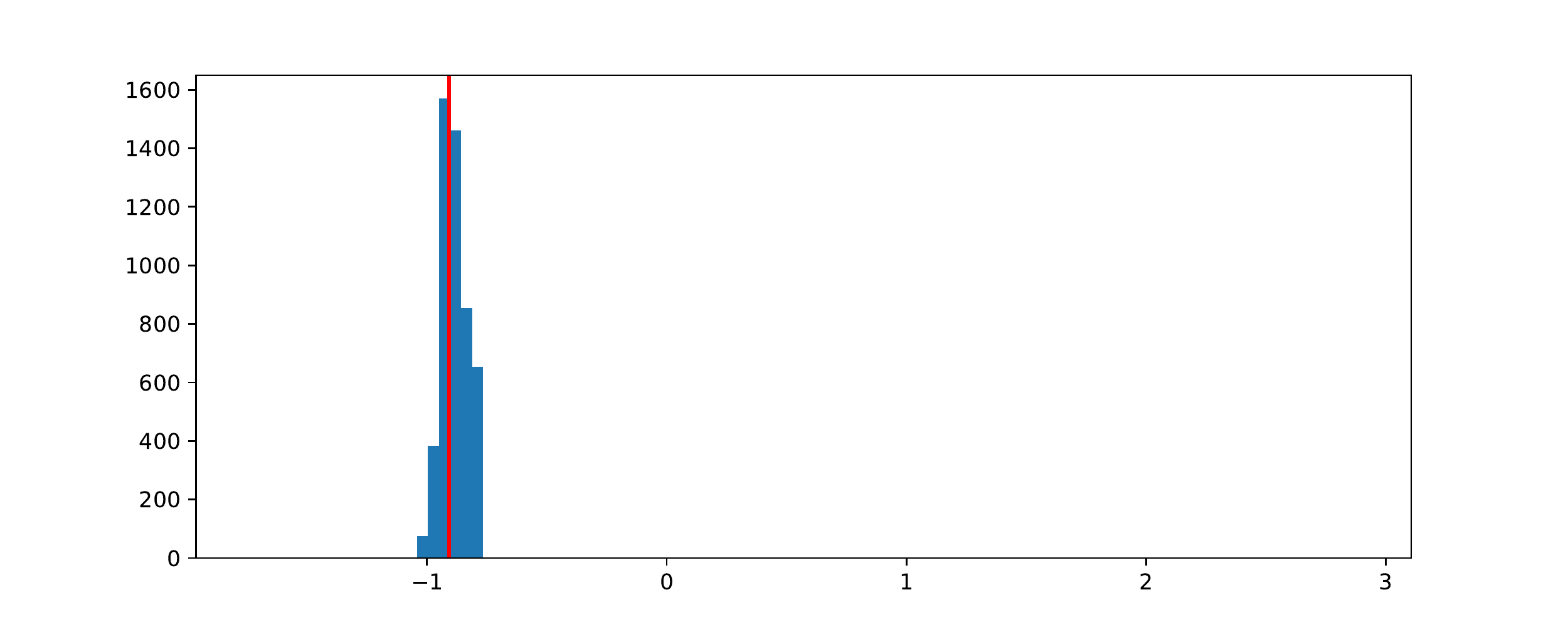}\\
	(a) Observation 948127 & (b) Observation 722309\\
	\includegraphics[width=0.5\textwidth]{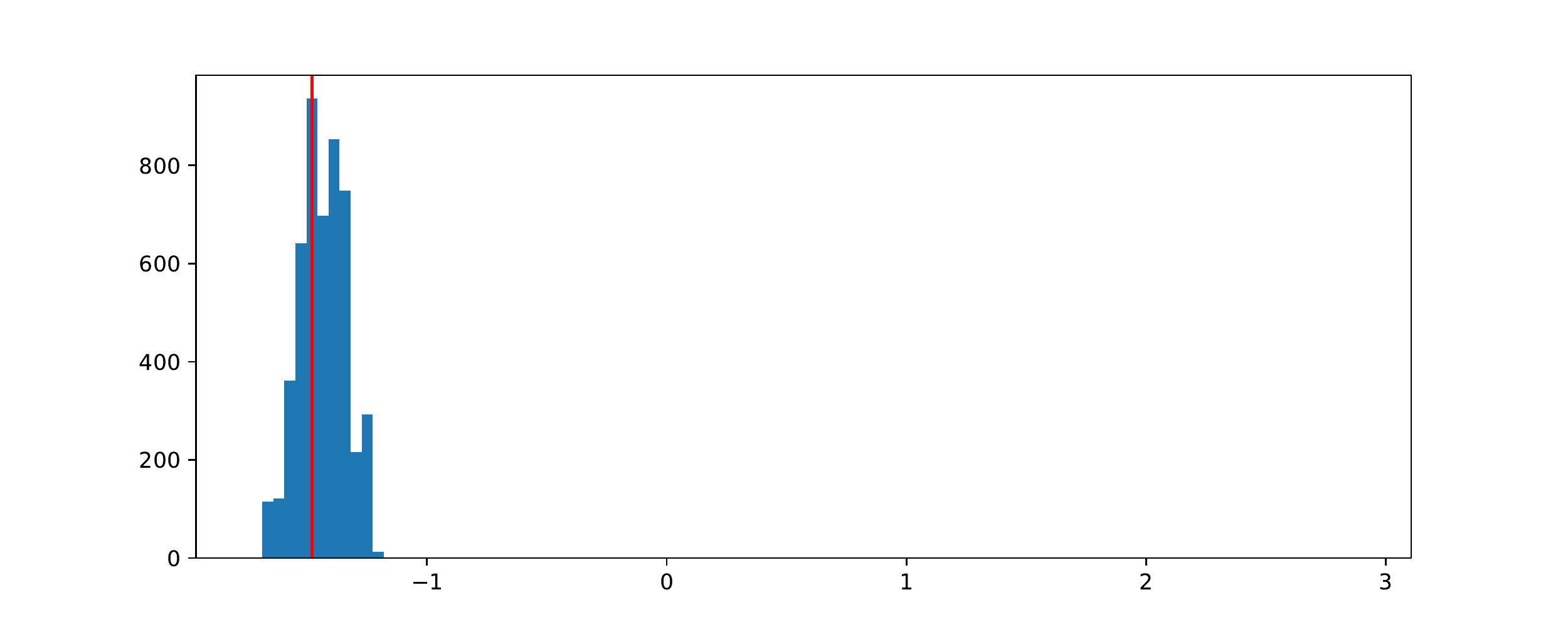} & \includegraphics[width=0.5\textwidth]{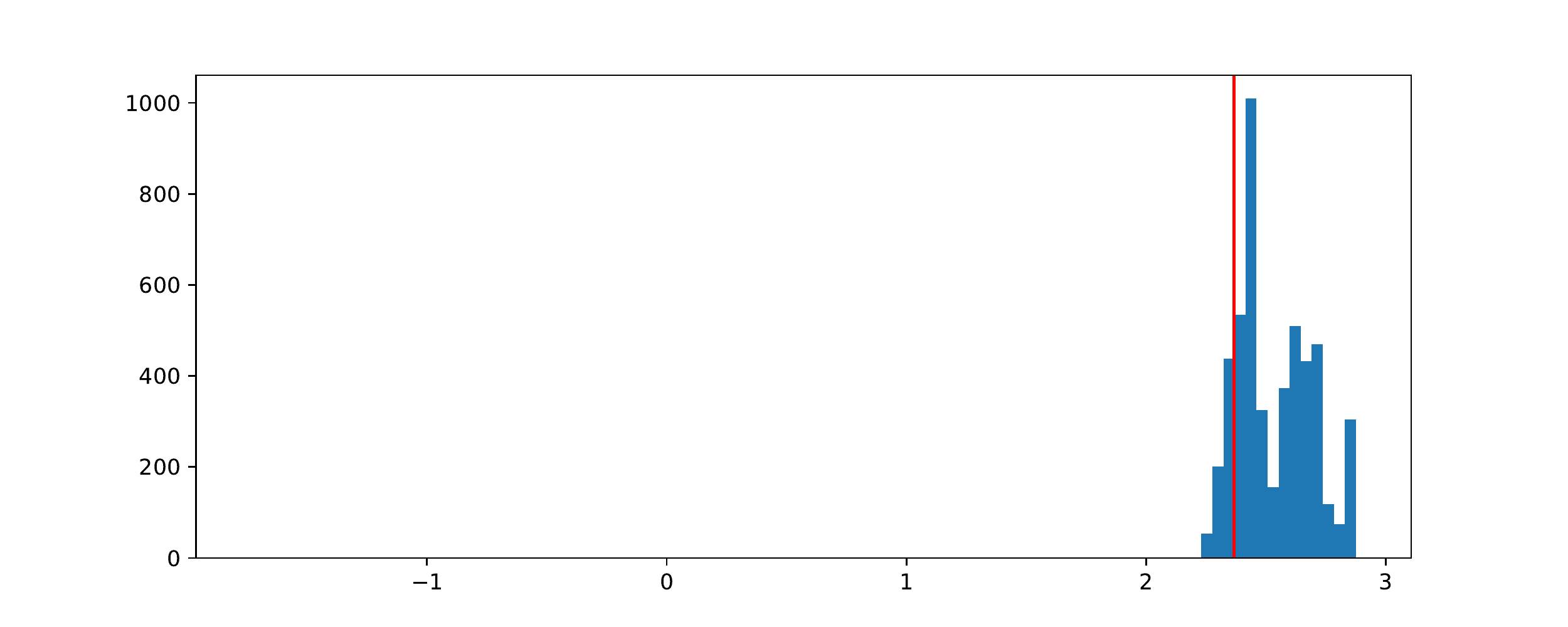}\\
	(a) Observation 608936 & (b) Observation 988391
\end{tabular}
\caption{Posterior histograms for four randomly chosen out-of-sample response values. The vertical red line is the observed value of the response.}
\end{figure}

Overall, our model provides a competitive performance to the state-of-the art techniques used for predicting and UQ analysis of complex models, such as satellite drag. The model is able to capture the distribution of the response and provide accurate predictions. The model is also able to provide uncertainty quantification in the form of credible prediction intervals.

\section{Discussion}\label{sec:discussion}
Generative AI is a simulation-based methodology, that takes joint samples of observables and parameters as an input and then applies nonparametric regression in a form of deep neural network by regressing $\theta$ on a non-linear function $h$ which is a function of dimensionality-reduced sufficient statistics of $\theta$ and a randomly generated stochastically uniform error. In its simplest form, $h$, can be identified with its inverse CDF. 

One solution to the multi-variate case is to use auto-regressive quantile neural networks. There are also many alternatives to the architecture design that we propose here. For example, autoencoders \cite{albert2022,akesson2021} or 
implicit models, see  \cite{diggle1984,baker2022,schultz2022}
There is also a link with indirect inference methods developed in \cite{pastorello2003,stroud2003nonlinear,drovandi2011,drovandi2015}

There are many challenging future problems. The method can easily handle high-dimensional latent variables. But designing the architecture for fixed high-dimensional parameters can be challenging. We leave it for the future research. Having learned the nonlinear map, when faced with the observed data $y_{\mathrm{obs}} $, one simply evaluates the nonlinear map at newly generated uniform random values. Generative AI circumvents the need for methods like MCMC that require the density evaluations. 

We also think that  over-parameterisation of the problem might  lead to a useful model, although it might also lead to non-identifiabaility of the weights in the regression. Two interesting approaches include,  the case when $K >k $ and mixture models with mixtures of Gaussian for $\tau$.  

\bibliographystyle{apalike}
\bibliography{InverseBayes,polsok,ref}
\end{document}